\newtheorem{definition}{Definition}
\newtheorem{theorem}{Theorem}[section]
\newtheorem{lemma}[theorem]{Lemma}
\newtheorem{corollary}[theorem]{Corollary}
\newtheorem{example}{Example}
\DeclarePairedDelimiter\floor{\lfloor}{\rfloor}
\begin{document}
%
\title{Combining Conventional Cryptography with Information Theoretic Security}

\author{Jason~Castiglione
\thanks{J. Castiglione is with the Department
of Electrical and Engineering, University of Hawaii Manoa,
Honolulu, HI, 96816 USA e-mail: jcastig@hawaii.edu }
}

\maketitle

\begin{abstract}
This paper highlights security issues that can arise when incorrect assumptions are made on the capabilities of an eavesdropper. In particular, we analyze a channel model based on a split Binary Symmetric Channel (BSC). Corresponding security parameters are chosen based on this channel model, and assumptions on the eavesdroppers capabilities. A gradual relaxation of the restrictions on the eavesdropper's capabilities will be made, and the resulting loss of security will be quantified. An alternative will then be presented that is based on stochastic encoding and creating artificially noisy channels through the usage of private keys. The artificial channel will be constructed through a deterministic process that will be computationally intractable to reverse. 
\end{abstract}

\section{Introduction}

\IEEEPARstart{A}{ path} to securing communications through the use of information theoretic notions was started by Shannon in the 1940's. In his papers he related fundamental notions, such as entropy, to the secrecy of cryptographic systems \cite{wtap:shan}. Inspired by the work of Shannon, Wyner published a paper proving that it was theoretically possible to secure communications solely through the choice of an encoding scheme for a specific channel model \cite{wtap:wyner}. Wyner's security model capitalizes on the eavesdropper receiving a noisier copy of what the intended user receives. \par

In general, for communications security, assumptions on the capabilities of eavesdroppers are required to design systems. In Wyner's paper, the presumption is on the channel noise of the eavesdropper. On the other hand, in contemporary cryptography it is assumed that the eavesdropper is subject to certain computational limitations. It is accepted that many people have attempted to break a cryptosystem, and their best efforts to recover plaintext require an infeasible amount of time. In particular, the computational complexity of factoring products of primes is unknown, although it is widely accepted to be a hard problem. Thus the only guarantee of the security of these systems is that numerous people have attempted to attack the system and have not succeeded.  \par

The question for physical layer security is what circumstances will allow similar conclusions about the eavesdropper's capabilities as in cryptography. In physical layer security, channel noise combined with a special encoder provide the security. Thus for Wyner's method, how can one ensure that the eavesdropper's channel has specific noise characteristics, thus limiting his/her decoding capabilities? \par
The goal in this paper is to work with a well-studied channel model, and gradually relax restrictions on the eavesdropper's capabilities. As these relaxations are made, the effect on the security of the system will be evaluated. We will show that if the real channel characteristics deviate even slightly from the assumptions, all guarantees of security are lost.  \par

Since in a practical system, no assumptions (either on the intended user's or the eavesdropper's channel) are verifiable, we can not claim $100\%$ security. The intended users rarely have control over the eavesdropper's access (especially in the wireless setting) or the eavesdropper's behavior, so basing a security system on questionable assumptions is risky. On the other hand, Wyner's basic idea can be easily combined with cryptographic methods to achieve a secure system based on computationally intractable problems. For example, we can artificially create a channel that is information-theoretically secure under the assumption that the shared key can not be recovered, i.e., the recovery of the key is computationally intractable. \par

The paper is organized as follows. In Section II, the channel model is presented. The general results for wiretap coding and achieving secrecy are discussed in Section III. Section IV quantifies the amount of secrecy lost when the eavesdropper has better access than anticipated. In Section V, a shared key cryptosystem is presented from Mihaljevic\cite{wtap:mishalpn}, that sets the path for a more general method of combining stochastic encoding with computationally intractable problems. \par

The notation for this paper is as follows. Random variables will be represented by uppercase letters ($S,X,Y,Z$), and their instances by lower case letters ($s,x,y,z$). Caligraphic uppercase letters will represent the domain of the corresponding random variable ($\mathcal{S},\mathcal{X},\mathcal{Y},\mathcal{Z}$). Since we will consider a sequence of codes, $C_M$, we will  use a $M$ subscript to denote the corresponding parameters, i.e., $C_M$ will have rate $\frac{K_M}{N_M}$.   \par 

\section{CHANNEL MODEL}
The communication model used for this paper is a broadcast channel with confidential messages (BCC) \cite{wtap:krner}, and depicted in Fig. \ref{fmod}. In general, the BCC model has confidential messages and public messages. The BCC model is a generalization of the Wyner wiretap model. In this paper, there will only be confidential messages. The model in this paper consists of a memoryless concatenated additive white Gaussian noise (AWGN) channel followed by a 2-level and L-level analog to digital (A/D) converters on the main and wiretap channels, respectively.\par
\begin{figure}[!htbp]
\centering
\includegraphics[width=3in]{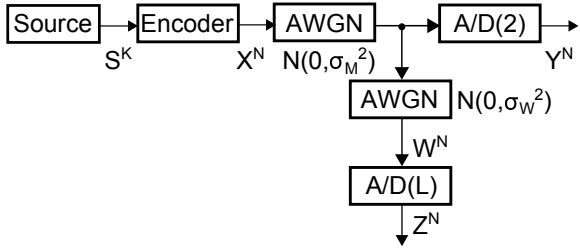}
\caption{Channel Model}
\label{fmod}
\end{figure}
\begin{definition}
 (Stochastic encoder) A \textbf{\textit{stochastic encoder}}, $f_E$, with rate $\frac{K}{N}$, input $S^K$, and output $X^N$, is a channel with transition probability $f(x^N|s^K)$, that satisfies;
 \begin{enumerate}
 \item $\sum_{x^N} f(x^N|s^K)=1, \text{ for any } s^K \in \mathcal{S}^K$
 \item If $f( x^N|s^K) > 0$ and $t^K \neq s^K$, then $f( x^N|t^K)=0$
 \end{enumerate}
\end{definition}
The input to the stochastic encoder is a uniform random variable, $S^K \in \{1,2,...,2^K\}$, which is the source message, and has entropy $$H(S^K)=K \cdot H(S)= K.$$
The stochastic encoder, with rate $\frac{K}{N}$, outputs $X^N \in \mathcal{X}^N$. For our channel model, we let $\mathcal{X}^N = \{ {-} 1 ,1 \}^N $. In this model there is a legitimate receiver and an eavesdropper. The legitimate user receives $Y^N$. When $L=2$, the eavesdropper receives $Z^N$. To distinguish the case when $L>2$, we will say that the eavesdropper receives $\hat{Z}^N$. There is independent AWGN noise in both the main and wiretap channel. We assume that the eavesdropper has a noise variance greater than that of the legitimate receiver. When just considering the input and output, the channels are both discrete memoryless channels (DMC).\par
 
The legitimate channel users will construct a security system based on the assumption that the eavesdropper is restricted to a two-level A/D converter, and $\sigma_W^2>0$. The goal in this paper is to quantify the effect of making an incorrect assumption about $L$, and suggest an improvement through the use of a shared key based cryptosystem.\par
In the case $L=2$, it follows that the main and eavesdropper channels can be individually modeled as binary symmetric channels, $BSC(p)$, and $BSC(p_W)$ respectively. The corresponding crossover probabilities are  $p= \Phi(-1/\sqrt{\sigma_M^2})$, and $p_W= \Phi(-1/\sqrt{\sigma_M^2+\sigma_W^2})$ where $\Phi$ is the CDF for the normal distribution. Since $\sigma_W^2>0$, it follows that $0 \leq p < p_W<1/2$. This means that the capacity of the wiretap channel is lower than the capacity of the main channel.\par
\begin{definition}
(Stochastically degraded channel) The channel, $(X,p_{Z|X},Z)$  , is said to be \textbf{\textit{stochastically degraded}} with respect to the channel, $(X,p_{Y|X},Y)$, if there exists a channel, $(Y,p_{Z|Y},Z)$, such that for every $(x,z) \in \mathcal{X}  \times \mathcal{Z},$
\begin{equation}
p_{Z|X}(z|x)=\sum_{y \in \mathcal{Y}} p_{Z|Y}(z|y)p_{Y|X}(y|x)
\end{equation}

\end{definition}

First note, the concatenation of $BSC(p)$  and $BSC(p_Y )$ channels is a $BSC(p_W)$ channel, where  $p_W = p(1-p_Y )+(1-p)p_Y$. Thus with regards to conditional probabilities, the concatenated and split BSCs are equivalent, see Fig. \ref{stocheq}. It follows that the wiretap channel is stochastically degraded with respect to the main channel \cite{wtap:onbsc}. This is important since the results in Wyner's wiretap paper pertain strictly to concatenated channels, yet they are extendable to split channels under these conditions.

\begin{figure}[!htbp]
\centering
\includegraphics[width=3 in]{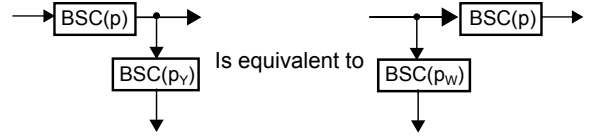}
\caption{Equivalence with regards to conditional probability}
\label{stocheq}
\end{figure}

\section{Secrecy Capacity}

It is clear that we have a channel model where the eavesdropper is “worse off” than the intended receiver. Now the question is how the additional uncertainty in the eavesdropper's channel leads to secrecy. To quantify the ability to send confidential messages over a broadcast channel, we define first a measure of uncertainty that the eavesdropper has with regards to the original message.\par
\begin{definition} (Equivocation) The \textbf{\textit{equivocation}} of the eavesdropper is defined as 	
\begin{equation}
\Delta \triangleq \frac{H(S^K|Z^N)}{K}
\end{equation}

\end{definition} \par
In the event  $H(S^K | Z^N )=H(S^K)=K$, i.e., $Z^N$ provides no information about the original source, the eavesdropper's only available method is to guess according to the distribution of the source message. In designing a code, the objective is to achieve security through maximizing the equivocation. That is if $H(S^K|Z^N)=H(S^K)$, then the system is secure. 
\begin{definition} (Wiretap Code) A \textbf{\textit{wiretap $(2^K,N)$ code}}, $C_N$, for a BCC consists of
\begin{itemize}
	\item Message set, $\mathcal{S}^K = \{1,2, \dots , 2^K \} $
	\item Stochastic encoder, $f_E:\mathcal{S}^K \rightarrow \mathcal{X}^N$
	\item Decoder, $g:\mathcal{Y}^N \rightarrow \mathcal{S}^K$
\end{itemize}
\end{definition} 

Secrecy in these systems is introduced through the use of the stochastic encoder. The encoder will map one message to different outputs such that when the eavesdropper has enough error, any attempt to decode will result in a random message.

\begin{example}
Assume that the main channel is noiseless, and the wiretap channel is a $BSC(0.25)$. Let $\mathcal{S}^K =\{1,2\},H(S^K )=1$. Define the stochastic encoder as follows,
\begin{itemize}
\item[]  $1 \mapsto (-1,-1)$ or $(1,1)$ with equal probability 
\item[]  $2 \mapsto (1,-1)$ or $(-1,1)$ with equal probability
\end{itemize}
We can calculate the equivocation of this simple coding scheme, which results in 
\begin{align}
H(S|Z^2)=&- [ \left( (1-p_W)^2 +p_W^2 \right) * \log \left( (1-p_W)^2 +p_W^2 \right) \\
&+ \left( 2(1-p_W)p_W \right) * \log \left( 2(1-p_W)p_W \right) ] \\
\approx & 0.954 
\end{align}
Observe in the case we transmit the message without any encoding over the channel, the eavesdropper will have equivocation $h(p_W) \approx 0.81$. Thus we have increased the equivocation of the eavesdropper using a wiretap $(2,2)$ code. Increasing equivocation through a random one-to-many mapping is the notion behind using a stochastic encoder for security. 
\end{example}
\subsection{Achievability}
In the above example, we presented a rate $\frac{1}{2}$ stochastic encoder that increased the eavesdropper's equivocation. The question remains as to what are the limits with regards to possible rates, and equivocation. First we define requirements for a rate-equivocation pair to be achievable.
 
\begin{definition}
(Achievability)
A rate-equivocation pair, $(R,d)$, is \textbf{achievable} if there exists a series of wiretap $(2^{K_M },N_M )$ codes, $ \{ C_M \}_{M\geq 1}$ such that
\begin{enumerate}
\item  $\lim_{M \rightarrow \infty} \frac{H(S) \cdot  K_M}{N_M}=R$ (\textbf{Rate})
\item $\lim_{M \rightarrow \infty} P_e (C_M)=0$ (\textbf{Probability of Error})
\item  $\lim_{M \rightarrow \infty} \frac{H(S^{K_M}|Z^{N_M})}{K_M}\geq d $ (\textbf{Equivocation})
\end{enumerate}

\end{definition} \par
The second achievability condition assures that the legitimate user will be able to decode the confidential message. The third achievability condition is called weak secrecy, and puts a lower bound on the wiretapper's equivocation. Now we have the framework to define the secrecy capacity.\par

\begin{definition}
(Secrecy Capacity)
The secrecy capacity of a BCC with no common message is
\begin{equation}
C_S \triangleq  \sup_R \{R:(R,H(S)) \:  is \: \text{achievable} \} 
\end{equation}

\end{definition}
Thus $C_S$ is the maximum rate at which error free communication is possible over the main channel, with maximum equivocation over the wiretap channel. Furthermore the region of achievable rates is characterized by the following generalization of the Wyner's results.
\begin{theorem}
(Region of Achievability)(Korner and Csiszar \cite{wtap:krner})
Given a BCC with no common message the region, 
$$\mathfrak{R}_{1e}\triangleq \left\{ (R,\frac{R \cdot d}{H(S)} ) | (R,d) \text{ is achievable} \right\} $$ is a closed convex set for which there exist random variables, $U$ and $V$, such that  $U \rightarrow V \rightarrow X \rightarrow Y$ and $U \rightarrow V \rightarrow X \rightarrow Z$ are Markov chains.  Furthermore, the conditional distribution of $Y$ (respectively $Z$) given $X$ characterizes the main (respectively wiretap) channel and
\begin{enumerate}
\item $0 \leq d \leq H(S)$
\item $R \cdot d \leq H(S) \cdot  \left[ I(V;Y|U) - I(V;Z|U) \right]$
\item $R \leq I(V;Y|U) + \min \left[ I(U;Y), I(U;Z) \right]$
\end{enumerate} 
In particular,
\begin{equation}
 C_S= \max_{ {V \rightarrow X \rightarrow Y }\choose{V \rightarrow X \rightarrow Z}} \left[ I(V;Y) -I(V;Z) \right]
\end{equation}

\end{theorem}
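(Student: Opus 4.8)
The plan is to prove the two inclusions separately --- achievability (every $(R,d)$ satisfying the three conditions for some admissible $U,V$ is achievable) and the converse (every achievable pair arises in this form) --- and then to specialize to obtain the formula for $C_S$. For \emph{achievability} I would use superposition coding combined with a double random-binning construction. Fix $p(u)$, $p(v|u)$ and a prefix channel $p(x|v)$ consistent with the required Markov chains. Generate $2^{nR_0}$ cloud-center sequences $u^n$, and on each generate $2^{n(R_1+R_1')}$ satellite sequences $v^n$ partitioned into $2^{nR_1}$ bins of $2^{nR_1'}$ sequences each. The confidential message $(w_0,w_1)$ picks the cloud index $w_0$ and the bin index $w_1$; the stochastic encoder draws the within-bin index $w_1'$ uniformly at random and transmits $x^n$ obtained by passing $v^n(w_0,w_1,w_1')$ through $p(x|v)$, so that $R=R_0+R_1$.

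The legitimate receiver performs joint-typicality decoding of $(u^n,v^n)$; by the packing lemma this succeeds with vanishing error whenever $R_0 < I(U;Y)$ and $R_1+R_1' < I(V;Y|U)$. For the equivocation I would take $R_1' \approx I(V;Z|U)$ and $R_0 \le I(U;Z)$, so that the eavesdropper can resolve a satellite codeword within each cloud but not its bin index; a conditional-typicality counting argument in the spirit of Wyner's wiretap analysis (equivalently a soft-covering/channel-resolvability bound) then yields $H(S^{K}\,|\,Z^{N}) \ge n\bigl[I(V;Y|U)-I(V;Z|U)\bigr]-n\varepsilon_n$, which gives $R_e \le I(V;Y|U)-I(V;Z|U)$ and, combining the rate constraints, $R \le I(V;Y|U)+\min[I(U;Y),I(U;Z)]$. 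Letting $w_1$ carry only part of the message (the remainder going unprotected through $w_0$) and then time-sharing between such codes fills out the entire closed convex region, with $0\le d\le H(S)$ automatic since conditioning cannot increase entropy.

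For the \emph{converse}, start from a sequence $\{C_M\}$ achieving $(R,d)$. Fano's inequality applied to $P_e(C_M)\to 0$ gives $H(S^{K_M}\,|\,Y^{N_M}) \le N_M\varepsilon_M$ with $\varepsilon_M\to 0$, while the equivocation hypothesis gives $H(S^{K_M}\,|\,Z^{N_M}) \ge N_M R_e - o(N_M)$. Hence $N_M R_e \le H(S^{K_M}\,|\,Z^{N_M}) - H(S^{K_M}\,|\,Y^{N_M}) + N_M\varepsilon_M = I(S^{K_M};Y^{N_M}) - I(S^{K_M};Z^{N_M}) + N_M\varepsilon_M$, so the task reduces to single-letterizing $I(S;Y^N)-I(S;Z^N)$. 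Expanding the two mutual informations and applying the Csiszár sum identity makes the cross terms telescope, which dictates the auxiliary variables $U_i \triangleq (Y^{i-1}, Z_{i+1}^N)$ and $V_i \triangleq (S, U_i)$; one checks directly that $U_i \to V_i \to X_i \to Y_i$ and $U_i \to V_i \to X_i \to Z_i$ are Markov chains. Introducing a time-sharing index $Q$ uniform on $\{1,\dots,N\}$ and setting $U=(Q,U_Q)$, $V=(Q,V_Q)$, $X=X_Q$, $Y=Y_Q$, $Z=Z_Q$ turns the per-letter sums into single-letter mutual informations and yields $R_e \le I(V;Y|U)-I(V;Z|U)$ in the limit. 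Bounding $R$ via $N_M R = H(S^{K_M}) = I(S^{K_M};Y^{N_M}) + H(S^{K_M}\,|\,Y^{N_M})$ in two ways --- once keeping the $I(U;Y)$ term, and once routing through $I(S;Z^N)$ and the sum identity to bring in $I(U;Z)$ --- produces condition~3 with the minimum, and closedness/convexity follow from the $Q$ construction and a standard limiting argument.

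Finally, for $C_S$ set $d=H(S)$, so $R_e=R$; conditions~1--3 then force $R \le I(V;Y|U)-I(V;Z|U)$, and a merging/data-processing argument shows one may take $U$ constant without loss, leaving $C_S=\max_{V\to X\to Y,\;V\to X\to Z}[I(V;Y)-I(V;Z)]$ (in our split-BSC model the prefix channel is essentially degenerate, which simplifies the maximization). The step I expect to be the real obstacle is the equivocation lower bound on the achievability side --- pinning down exactly how large the randomization bin must be and showing it drives $H(S^K|Z^N)/K$ to the claimed value --- together with its converse-side counterpart, namely guessing auxiliary random variables for which the Csiszár sum identity collapses to a clean single-letter bound; the remaining manipulations are routine applications of Fano's inequality, the chain rule, and convexification.
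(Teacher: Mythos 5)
This theorem is not proved in the paper at all: it is imported verbatim from Csisz\'{a}r and K\"{o}rner \cite{wtap:krner} (and its degraded-channel specialization, used later, is likewise cited to \cite{wtap:krner} and \cite{wtap:bloch}), so there is no internal argument to compare yours against. What you have written is a sketch of the standard proof from the literature, and its architecture is sound: superposition coding with an extra randomization layer of rate $\approx I(V;Z|U)$ and a prefix channel $p(x|v)$ for achievability, with the cloud rate held below $\min\left[ I(U;Y), I(U;Z)\right]$ so that the $U$-layer is resolvable by both receivers and the equivocation accounting conditions on $U$; message splitting plus time sharing to sweep out the whole rate--equivocation region; and a Fano-plus-Csisz\'{a}r-sum-identity converse with the auxiliary choices $U_i=(Y^{i-1},Z_{i+1}^N)$, $V_i=(S,U_i)$ and a uniform time-sharing index, which is exactly the device used in \cite{wtap:krner}. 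The passage to $C_S$ by setting $d=H(S)$ and discarding $U$ is also standard (note $I(V;Y|U)-I(V;Z|U)$ is an average over realizations $u$, so the best $U=u$ can be fixed). The two steps you flag as the real work --- the equivocation lower bound $H(S^{K}|Z^{N})\geq N\left[ I(V;Y|U)-I(V;Z|U)\right]-N\varepsilon_N$, which in a full write-up needs a list-decoding/Fano argument showing the eavesdropper can recover the within-bin index given the cloud and bin (not just a typicality count), and the verification of the Markov chains after single-letterization --- are precisely the content of the original proof, so nothing essential is missing at the level of a sketch; your route simply supplies the proof the paper delegates to its references.
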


 Since the wiretap channel is stochastically degraded with respect to the main channel, \cite{wtap:krner} and \cite{wtap:bloch} give us that for our channel model, the following corollary holds.
\begin{corollary}
\label{acst}
(Achievability for Stochastically Degraded Channels)(Korner and Csiszar \cite{wtap:krner}) If the wiretap channel is stochastically degraded with respect to the main channel, then the above region, $\mathfrak{R}_{1e}$, simplifies to those pairs $(R,\frac{R \cdot d}{H(S)} )$ such that

\begin{enumerate}
\item $0 \leq R \cdot d \leq H(S) \cdot  \left[ I(X;Y)-I(X;Z) \right] $
\item $0 \leq d \leq H(S) $
\item $0 \leq R \leq I(X;Y) $
\end{enumerate}
In particular, 
\begin{equation}
C_S= \max_{ p_X} \left[ I(X;Y) -I(X;Z) \right]
\end{equation}  
where the maximization is over probability distributions, $p_X$, on $X$.
\end{corollary}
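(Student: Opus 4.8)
The plan is to obtain the corollary from the region in the preceding theorem by arguing that, once the wiretap channel is stochastically degraded with respect to the main channel, the auxiliary variables $U,V$ may be specialized to $U$ constant and $V=X$ without loss, which collapses the three conditions of the theorem onto the three conditions claimed here. I would prove the two inclusions separately. The inclusion ``$\supseteq$'' is immediate: given a $p_X$ satisfying (1)--(3), instantiate the theorem with $U$ a constant and $V=X\sim p_X$, so that $I(V;Y|U)=I(X;Y)$, $I(V;Z|U)=I(X;Z)$ and $I(U;Y)=I(U;Z)=0$; the theorem's inequalities then become exactly (1)--(3), and so every pair $\left(R,\frac{R\cdot d}{H(S)}\right)$ of this form lies in $\mathfrak{R}_{1e}$.

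For ``$\subseteq$'' I would start from an arbitrary achievable $(R,d)$ together with the chains $U\rightarrow V\rightarrow X\rightarrow Y$ and $U\rightarrow V\rightarrow X\rightarrow Z$ guaranteed by the theorem. The first step is a physical-degradation reduction: by the definition of stochastic degradation there is a channel $p_{Z|Y}$ making the wiretap channel a concatenation, so I would replace $Z$ by $\tilde Z$ produced from $Y$ through $p_{Z|Y}$. Since $Z$ is conditionally independent of $(U,V)$ given $X$, the joint law of $(U,V,X,\tilde Z)$ equals that of $(U,V,X,Z)$, hence every mutual-information term is unchanged, but now $U\rightarrow V\rightarrow X\rightarrow Y\rightarrow\tilde Z$ is a genuine Markov chain. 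Taking $p_X$ to be the induced marginal of $X$, I would then bound the secrecy term in two moves. First, $I(V;Y|U)-I(V;Z|U)\le I(V;Y)-I(V;Z)$: the difference of the two sides equals $I(U;Y)-I(U;Z)$ after using $I(U;Y|V)=I(U;Z|V)=0$ (from $U\rightarrow V\rightarrow\text{rest}$), and $I(U;Y)\ge I(U;Z)$ by data processing along $U\rightarrow Y\rightarrow\tilde Z$. Second, $I(V;Y)-I(V;Z)\le I(X;Y)-I(X;Z)$, since this difference is $I(X;Y|V)-I(X;Z|V)\ge 0$ by data processing along $X\rightarrow Y\rightarrow\tilde Z$ conditioned on $V$. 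For the rate I would write $R\le I(V;Y|U)+\min[I(U;Y),I(U;Z)]\le I(V;Y|U)+I(U;Y)=I(U,V;Y)\le I(X;Y)$, the last inequality being data processing along $(U,V)\rightarrow X\rightarrow Y$. These give (1) and (3) at this $p_X$, while (2) transfers verbatim; closedness and convexity of the simplified region are then inherited from $\mathfrak{R}_{1e}$ through the established equality.

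The characterization of $C_S$ follows by specializing to maximal equivocation $d=H(S)$: then $\frac{R\cdot d}{H(S)}=R$, condition (1) becomes $R\le I(X;Y)-I(X;Z)$, condition (3) is implied by it, and taking the supremum of $R$ over $p_X$ gives $C_S=\max_{p_X}\left[I(X;Y)-I(X;Z)\right]$. I expect the main obstacle to be the elimination of the auxiliary $U$ --- in particular verifying the identity $I(V;Y|U)-I(V;Z|U)=I(V;Y)-I(V;Z)-\left(I(U;Y)-I(U;Z)\right)$ and that the correction term is nonnegative under physical degradation --- together with checking that the replacement of $Z$ by $\tilde Z$ really leaves invariant all the conditional mutual informations appearing in the theorem; the remaining steps are routine applications of the data-processing inequality.
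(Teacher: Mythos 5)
Your argument is correct, but note that the paper does not actually prove this corollary: it is stated as a known result and attributed to Csisz\'ar--K\"orner (and Bloch--Barros), so there is no in-paper proof to compare against. What you supply is the standard auxiliary-variable elimination that those references use for the degraded case, and your execution is sound: the ``$\supseteq$'' direction by taking $U$ constant and $V=X$; the ``$\subseteq$'' direction by first passing from stochastic to physical degradation (replacing $Z$ by $\tilde Z$ drawn from $Y$ via $p_{Z|Y}$, which leaves the joint law of $(U,V,X,Z)$ unchanged because all the bounds depend only on the marginal channels $p_{Y|X}$ and $p_{Z|X}$), and then eliminating $U$ and $V$ via the identities $I(V;Y|U)=I(V;Y)-I(U;Y)$, $I(V;Z|U)=I(V;Z)-I(U;Z)$ (valid since $I(U;Y|V)=I(U;Z|V)=0$) together with the data-processing inequalities $I(U;Z)\leq I(U;Y)$, $I(X;Z|V)\leq I(X;Y|V)$ and $I(U,V;Y)\leq I(X;Y)$ along the chain $U\rightarrow V\rightarrow X\rightarrow Y\rightarrow\tilde Z$. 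The specialization $d=H(S)$ giving $C_S=\max_{p_X}\left[I(X;Y)-I(X;Z)\right]$ is also handled correctly. In short, your proposal is a complete, self-contained derivation of a statement the paper only cites; the only points worth making explicit in a polished write-up are that conditioning on $V$ preserves the Markov chain $X\rightarrow Y\rightarrow\tilde Z$ (needed for $I(X;Z|V)\leq I(X;Y|V)$), and that in a general BCC $Y$ and $Z$ need not be conditionally independent given $X$, which is exactly why the reduction to the physically degraded $\tilde Z$ is the right first step.
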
 
 
Thus for our channel model, in the case $L=2$, it follows, 
\begin{align}
C_S &=  \max_{p_X } [I(X;Y)-I(X;Z)] \\
&= \max_{p_X} [ H(Y) - H(Y|X) - H(Z) + H(Z|X) ]  \\
&= \max_{p_X} [ \mathcal{H}(p_W)-\mathcal{H}(p) + H(Y)-H(Z) ] \\
&\leq \mathcal{H}(p_W)-\mathcal{H}(p) \label{eq:bla}
\end{align}
Observe Eq. (\ref{eq:bla}) follows because a BSC does not decrease entropy, i.e., $H(Y) \leq H(Z)$. Furthermore note for $X$ uniform, $H(Y)=H(Z)$, thus $$C_S=\mathcal{H}(p_W)-\mathcal{H}(p). $$

This means that if the legitimate users believe that the eavesdropper has a A/D(2) converter, they will also believe that they can securely transmit over the main channel at $h(p_W) - h(p)$ bit per channel use. Thereby under this belief, the eavesdropper will not be able to do better than random guessing of the source message. 

\subsection{Code Parameters}
From this point, since $(C_S,1)$ is achievable, we will construct a sequence of random linear codes as specified in \cite{wtap:onbsc}.  Let $\{ \epsilon_M \}_{M \geq 1}$ be a sequence of positive numbers $\epsilon_1, \epsilon_2,\dots$, such that $\epsilon_M < 1/M$. Then for each $\epsilon_M$, we can choose $N_M$  large enough such that there exists a wiretap code, $C_M$, satisfying;

\begin{enumerate}
\item $\frac{K_M}{N_M} \geq h(p_W)-h(p)-\epsilon_M$
\item $\frac{H(S^{K_M}|Z^{N_M})}{K_M} \geq 1-\epsilon_M$
\item $P_e(C_M) \leq \epsilon_M$
\end{enumerate} \par

 Note, each $C_M$ is specified by a binary $(N_M-K_{M,2} ) \times (N_M )$ matrix, $H_{M,2}$, with the following parameters;

 \begin{align}
K_{M,2} &= \floor{N_M [1-h(p)-2 \epsilon_M ]} \\
K_{M,1} &=\floor{N_M [1-h(p_W )-2\epsilon_M]} \\  
K_M     &=K_{M,1}-K_{M,2} 
 \end{align} 
Then to encode a source message, $s^{K_M}$, we randomly choose a solution $x^{N_M}$ of the equation,
\begin{equation}
x^{N_M } H_{M,2}^T=[\textbf{0} \; s^{K_M}  ]
\end{equation}
as the encoded message. To decode, a typical set decoder is used. In practice, this is not a realistic decoding scheme, but will suffice in this paper to allow us to calculate effects on equivocation. Han Vinck and Chen \cite{wtap:onbsc} have an elegant proof that this coding method will suffice and achieves $(C_S,1)$.

\section{Soft Decoding}
The rate-equivocation pair $(C_S,1)$ is achievable, and we have an associated sequence of wiretap codes, $\{C_M\}_{M \geq 1}$. We will now relax the restriction on the A/D(L) converter to analyze the effect on equivocation. Our goal will be to look at the effects of increasing $L$ on the equivocation. \par

The \textit{equivocation loss} is defined as the difference between the believed equivocation, and the actual equivocation. In other words, the equivocation loss is the amount of information (in bits per channel use) that is leaked to the eavesdropper as a result of a false belief (wrong assumption) by the legitimate users. In particular, we are interested in the asymptotic value of the equivocation loss,
\begin{equation}
\frac{H(S^K|Z^N)-H(S^K|\hat{Z}^N)}{K}
\end{equation}
as $L \rightarrow \infty$, where $\hat{Z}$ is dependent on L. Since we are unaware of the capabilities of the eavesdropper, our assumption will be that arbitrarily level A/D converters are available.

\begin{theorem}
(Equivocation Loss) \\
$$\lim_{M \rightarrow \infty} \frac{H(S^{K_M}| Z^{N_M})- H(S^{K_M}| \hat{Z}^{N_M})}{K_M} =$$ $$ \frac{h(p_W) -1 + I(X;\hat{Z})}{h(p_W)-h(p)}$$
\end{theorem}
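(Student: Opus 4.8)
The plan is to rewrite the equivocation loss purely in terms of mutual informations and then evaluate the limit from the linear coset structure of the codes $\{C_M\}$. Since $S^{K_M}$ is uniform, $H(S^{K_M}\mid Z^{N_M})=K_M-I(S^{K_M};Z^{N_M})$ and similarly with $\hat Z^{N_M}$, so the loss equals $\big(I(S^{K_M};\hat Z^{N_M})-I(S^{K_M};Z^{N_M})\big)/K_M$. The equivocation property $H(S^{K_M}\mid Z^{N_M})/K_M\ge 1-\epsilon_M$ forces $I(S^{K_M};Z^{N_M})/K_M\to 0$, so it is enough to compute $\lim_M I(S^{K_M};\hat Z^{N_M})/K_M$; since $K_M/N_M\to C_S=h(p_W)-h(p)$, I will instead compute $\lim_M I(S^{K_M};\hat Z^{N_M})/N_M$ and divide. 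In the construction $S^{K_M}$ is the message portion of the syndrome $x^{N_M}H_{M,2}^T$, hence a deterministic function of $X^{N_M}$, so $H(S^{K_M}\mid X^{N_M},\hat Z^{N_M})=0$; expanding $H(S^{K_M},X^{N_M}\mid \hat Z^{N_M})$ in two ways then gives the clean identity
$$H(S^{K_M}\mid \hat Z^{N_M})=H(X^{N_M}\mid \hat Z^{N_M})-H(X^{N_M}\mid S^{K_M},\hat Z^{N_M}),$$
and the whole problem reduces to the asymptotics of these two residual entropies.

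For the second residual entropy: conditioned on $S^{K_M}$, the encoder output $X^{N_M}$ is uniform over a coset of a linear code of rate $K_{M,1}/N_M\to 1-h(p_W)$. A finer quantizer cannot lose information, so for uniform $X$ one has $I(X;\hat Z)\ge I(X;Z)=1-h(p_W)$, and the coset rate $1-h(p_W)-2\epsilon_M$ therefore stays strictly below the capacity $I(X;\hat Z)$ of the memoryless channel $X\to\hat Z$. Using that $\{C_M\}$ comes from the random linear ensemble of \cite{wtap:onbsc} — so that one realization is simultaneously a good code for the main channel, the $L=2$ wiretap channel, and the soft channel $X\to\hat Z$, and cosets of a good linear code inherit its decoding performance — the eavesdropper can reliably recover $X^{N_M}$ from $\hat Z^{N_M}$ once $S^{K_M}$ is revealed, so Fano's inequality gives $H(X^{N_M}\mid S^{K_M},\hat Z^{N_M})=o(N_M)$.

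For the first residual entropy I would sandwich $H(X^{N_M}\mid \hat Z^{N_M})$. Marginally each coordinate $X_i$ is uniform on $\{-1,1\}$, so $I(X^{N_M};\hat Z^{N_M})=H(\hat Z^{N_M})-\sum_i H(\hat Z_i\mid X_i)\le\sum_i I(X_i;\hat Z_i)=N_M\,I(X;\hat Z)$, which with $H(X^{N_M})=K_{M,2}$ and $K_{M,2}/N_M\to 1-h(p)$ yields $H(X^{N_M}\mid \hat Z^{N_M})\ge K_{M,2}-N_M\,I(X;\hat Z)$; the bound $H(X^{N_M}\mid\hat Z^{N_M})\ge 0$ handles the other regime. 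For the matching upper bound one shows the ensemble code attains the converse equivocation on the soft channel, $H(X^{N_M}\mid\hat Z^{N_M})/N_M\to\max\{0,(1-h(p))-I(X;\hat Z)\}$ — this is exactly the standard random-linear-code equivocation computation used to prove achievability in \cite{wtap:onbsc}, now carried out for the DMC $X\to\hat Z$ in place of $BSC(p_W)$. Combining the two residual entropies in the regime $I(X;\hat Z)\le 1-h(p)$ gives $H(S^{K_M}\mid\hat Z^{N_M})/N_M\to(1-h(p))-I(X;\hat Z)$; dividing by $K_M/N_M\to h(p_W)-h(p)$ and subtracting from the believed equivocation $1$ produces precisely $\big(h(p_W)-1+I(X;\hat Z)\big)/\big(h(p_W)-h(p)\big)$.

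The step I expect to be the main obstacle is the upper bound just mentioned: showing that the particular code produced by \cite{wtap:onbsc}, which is guaranteed only to behave well on $BSC(p)$ and $BSC(p_W)$, also has converse-matching residual equivocation on the general soft-output channel $X\to\hat Z$. The clean way around it is to rerun the random-coding argument of \cite{wtap:onbsc} with the additional soft-channel equivocation and decoding events included and take a union bound, so that a single realization has all required properties; the underlying fact is that good random linear codes are capacity-universal over symmetric DMCs. A secondary point worth recording is that the stated formula is the actual loss only while $I(X;\hat Z)\le 1-h(p)$ (equivalently, while the eavesdropper still cannot decode inside a coset of the subcode); for larger $I(X;\hat Z)$ the loss saturates at $1$ and the formula overshoots. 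This proviso is vacuous at $L=2$, where $I(X;\hat Z)=1-h(p_W)$ and the formula correctly returns a loss of $0$.
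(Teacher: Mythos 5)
Your skeleton (reduce everything to $H(S^{K_M}\mid\hat Z^{N_M})$, use that $S^{K_M}$ is a deterministic function of $X^{N_M}$ to get $H(S^{K_M}\mid\hat Z^{N_M})=H(X^{N_M}\mid\hat Z^{N_M})-H(X^{N_M}\mid S^{K_M},\hat Z^{N_M})$, then control the two residual entropies) starts from the same identity the paper uses, but you evaluate the residual terms by a different route, and that route creates work the paper avoids. The paper never analyzes the eavesdropper's decoder on the soft channel $X\to\hat Z$ at all: it writes the same decomposition once for $Z^{N_M}$ and once for $\hat Z^{N_M}$ and subtracts, so $H(X^{N_M}\mid Y^{N_M})$ cancels and the only code-dependent residue is the difference $H(X^{N_M}\mid S^{K_M},\hat Z^{N_M})-H(X^{N_M}\mid S^{K_M},Z^{N_M})$. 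Because $Z^{N_M}$ is a coarsening of $\hat Z^{N_M}$ (Markov chain $X^{N_M}\to\hat Z^{N_M}\to Z^{N_M}$), this difference is sandwiched between $-H(X^{N_M}\mid S^{K_M},Z^{N_M})$ and $0$, and the Fano-type bound $H(X^{N_M}\mid S^{K_M},Z^{N_M})\le h(P_{ew})+P_{ew}K_{M,2}$ already proved in \cite{wtap:onbsc} for the $BSC(p_W)$ makes it $o(K_M)$. So the first half of what you flag as your main obstacle --- showing the given codes admit in-coset decoding of $X^{N_M}$ from the soft output once $S^{K_M}$ is revealed --- is unnecessary: $H(X^{N_M}\mid S^{K_M},\hat Z^{N_M})\le H(X^{N_M}\mid S^{K_M},Z^{N_M})=o(K_M)$ follows from the hard-decision result alone, with no capacity-universality argument or re-randomization of the ensemble.

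The second half of your obstacle, the matching upper bound $H(X^{N_M}\mid\hat Z^{N_M})\le N_M\left[(1-h(p))-I(X;\hat Z)\right]+o(N_M)$ (equivalently $I(X^{N_M};\hat Z^{N_M})\ge N_M I(X;\hat Z)-o(N_M)$; the reverse inequality is automatic from memorylessness and uniform marginals), is a genuine issue, and it does not disappear in the paper's proof either: the line ``the same holds for $\hat Z^{N_M}$, replacing $Z^{N_M}$ as needed'' silently asserts the single-letterization $I(X^{N_M};\hat Z^{N_M})=N_M I(X;\hat Z)$, which is a resolvability-type property of the code on the soft-output channel and is not contained in \cite{wtap:onbsc}, which treats only the binary outputs. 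Your plan --- rerun the random linear coding argument with the soft-channel equivocation/decoding events included and union-bound so that one realization has all properties simultaneously --- is the legitimate way to close this, but be aware it strengthens the code-existence statement rather than proving the theorem for the code sequence exactly as specified. Finally, your proviso is correct and worth recording: the loss is trivially at most $1$, so the stated formula can only be the true loss while $I(X;\hat Z)\le 1-h(p)$; for small $\sigma_W^2$ and large $L$ one has $I(X;\hat Z)>1-h(p)$ and the right-hand side exceeds $1$, a regime neither the paper's statement nor its proof addresses. As you note, the proviso is vacuous at $L=2$.
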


\begin{proof}
This proof uses the analysis of the BSC-BSC wiretap channel given by Han Vinck, and Chen. First note by \cite{wtap:onbsc}, we have the following equalities,
\begin{align}
H(S^{K_M}|&Z^{N_M})=H(S^{K_M} , Z^{N_M})- H(Z^{N_M}) \\
=& H(S^{K_M} , X^{N_M}, Z^{N_M}) \nonumber \\ 
&- H(X^{N_M} | S^{K_M} , Z^{N_M}) - H(Z^{N_M}) \\
=& H(S^{K_M} , X^{N_M}| Z^{N_M}) \nonumber \\
&- H(X^{N_M} | S^{K_M} , Z^{N_M})\\
=& H(X^{N_M}| Z^{N_M}) \nonumber \\ 
&- H(X^{N_M} | S^{K_M} , Z^{N_M})\\
=& H(X^{N_M}| Z^{N_M})- H(X^{N_M} | Y^{N_M})  \nonumber \\ 
& + H(X^{N_M} | Y^{N_M})- H(X^{N_M} | S^{K_M} , Z^{N_M})\\
=& I(X^{N_M} ; Y^{N_M})- I(X^{N_M} | Z^{N_M})  \nonumber \\ 
& + H(X^{N_M} | Y^{N_M})- H(X^{N_M} | S^{K_M} , Z^{N_M})\\
=& N_M [I(X ; Y)- I(X | Z) ]  \nonumber \\ 
& + H(X^{N_M} | Y^{N_M})- H(X^{N_M} | S^{K_M} , Z^{N_M})
\end{align}
The same holds for $\hat{Z}^{N_M}$ , replacing $ Z^{N_M }$ as needed. Next we substitute both equations into the difference of equivocation, resulting in,
$$ \frac{H(S^{K_M}|Z^{N_M})-H(S^{K_M}|\hat{Z}^{N_M})}{K_M} = \frac{N_M [I(X ; \hat{Z})- I(X | Z) ] }{K_M} $$ 
$$+  \frac{[H(X^{N_M} | S^{K_M} , \hat{Z}^{N_M})-H(X^{N_M} | S^{K_M} , Z^{N_M})] }{K_M}$$
Since conditioning reduces entropy, and $ X^{N_M } \rightarrow \hat{Z}^{N_M } \rightarrow Z^{N_M }$  is a Markov chain, it follows,
\begin{align}
H(X^{N_M } |S^{K_M },\hat{Z}^{N_M } )&=H(X^{N_M } |S^{K_M },Z^{N_M },\hat{Z}^{N_M} ) ) \\
 & \leq H(X^{N_M} |S^{K_M},Z^{N_M} )
\end{align}
From \cite{wtap:onbsc}, we have,
\begin{enumerate}
\item $	H(X^{N_M} |S^{K_M},Z^{N_M} ) \leq h(P_{ew} )+P_{ew}*K_{M,2} $
\item For any $ 0 < \lambda < \frac{1}{2}$ , we can choose $N_M$ large enough so $P_{ew} \leq \lambda$
\item $ \lambda$ is dependent on $N_M$, and $\lim_{M \rightarrow \infty} \lambda =0 $
\end{enumerate}
Thus we can bound the magnitude of the difference of the conditional entropies and take the limit,
\begin{align}
0 \leq& \lim_{M \rightarrow \infty}  \frac{[H(X^{N_M} | S^{K_M} , Z^{N_M})-H(X^{N_M} | S^{K_M} , \hat{Z}^{N_M})] }{K_M} \\
\leq& \lim_{M \rightarrow \infty} \frac{2(h(P_{ew})+ P_{ew}K_{M,2})}{K_M} \\
\leq& \lim_{M \rightarrow \infty} \frac{2(h(\lambda)+ \lambda K_{M,2})}{K_M} \\
\leq& \lim_{M \rightarrow \infty} \frac{2(h(\lambda)+ \lambda (N_M[1-h(p)-2\epsilon_M]+1))}{N_M(h(p_W)-h(p)-\epsilon_M)} \\
=& \lim_{M \rightarrow \infty} \left[ \frac{2h(\lambda)}{N_M(h(p_W) -h(p) -\epsilon_M )} \right] \nonumber \\
 + & \lim_{M \rightarrow \infty} \left[ \frac{2\lambda(N_M[1-h(p)-2\epsilon_M]+1)}{N_M(h(p_W) - h(p) -\epsilon_M )} \right]  \\
=& 0
\end{align} \par
The last equality follows since for the first term in the sum, the numerator goes to $0$, and the denominator grows. The second term has $\lambda \rightarrow 0$ , while the remainder converges to a constant. To complete the proof since,
$$
h(p_W) - h(p) \geq \frac{K_M}{N_M} \geq h(p_W) - h(p) -\epsilon_M
$$

Then,
$$
\lim_{M \rightarrow \infty} \frac{K_M}{N_M} =h(p_W) - h(p)
$$
Hence,
\begin{align}
\lim_{M \rightarrow \infty} & \frac{H(S^{K_M}| Z^{N_M})- H(S^{K_M}| \hat{Z}^{N_M})}{K_M} \\
&= \lim_{M \rightarrow \infty} \frac{N_M[I(X;\hat{Z})-I(X;Z)]}{K_M}\\
&= \frac{h(p_W) -1 + I(X;\hat{Z})}{h(p_W)-h(p)}
\end{align}
\end{proof}
The equivocation loss theorem allows us to calculate the loss in uncertainty of the eavesdropper due to using an A/D(L) converter, when it is assumed by the legitimate users that the eavesdropper is using an A/D(2) converter.
\begin{lemma}
(Approximation Lemma) \\
For any $ \epsilon > 0$, there exists a $L>0 $ and A/D(L) converter such that
\begin{equation}
I(X;W)-I(X;\hat{Z})< \epsilon
\end{equation}
\end{lemma}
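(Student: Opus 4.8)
The plan is to observe that the $L$-level A/D converter on the wiretap side acts on the underlying analog (pre-quantization) wiretap observation, which I denote $W$; that is, $\hat{Z} = q_L(W)$ for a quantizer $q_L$, so that $X \rightarrow W \rightarrow \hat{Z}$ is a Markov chain. The data processing inequality immediately gives $I(X;\hat{Z}) \le I(X;W)$, so $I(X;W) - I(X;\hat{Z}) \ge 0$ for every A/D converter, and the content of the lemma is that this gap can be driven below any prescribed $\epsilon$ by refining the quantizer.

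First I would reduce the claim to a statement about conditional entropy. Because $\mathcal{X} = \{-1,1\}$ is binary, $I(X;W) = H(X) - H(X|W)$ and $I(X;\hat{Z}) = H(X) - H(X|\hat{Z})$, with $0 \le H(X|W) \le H(X|\hat{Z}) \le 1$; hence it suffices to make $H(X|\hat{Z}) - H(X|W) < \epsilon$. Writing $\pi(w) = P(X = 1 \mid W = w)$ and noting that $P(X = 1 \mid \hat{Z} = q_L(w))$ equals the average of $\pi$ over the quantization cell containing $w$, we have $H(X|W) = \mathbb{E}\!\left[h(\pi(W))\right]$ and $H(X|\hat{Z}) = \mathbb{E}\!\left[h\!\left(\mathbb{E}[\pi(W)\mid q_L(W)]\right)\right]$ with $h$ the binary entropy function.

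Next I would exhibit the A/D(L) converter. For parameters $B$ (range) and $\delta$ (cell width), take the $L = 2B/\delta + 2$ levels associated with the intervals $(-\infty,-B],\,(-B,-B+\delta],\,\dots,\,(B,\infty)$. As $B \to \infty$ and $\delta \to 0$ along a sequence, the generated partitions refine and the induced $\sigma$-algebras increase to the Borel $\sigma$-algebra of $W$; by the martingale (L\'evy) convergence theorem $\mathbb{E}[\pi(W)\mid q_L(W)] \to \pi(W)$ almost surely, and since $h$ is continuous and bounded by $1$, dominated convergence yields $H(X|\hat{Z}) \to H(X|W)$, i.e. $I(X;\hat{Z}) \to I(X;W)$. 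Equivalently, one may invoke the Gelfand--Yaglom--Pinsker characterization $I(X;W) = \sup_{\mathcal{P}} I(X;[W]_{\mathcal{P}})$ over finite measurable partitions $\mathcal{P}$ of the real line; since $I(X;W) \le 1 < \infty$, some finite partition --- realizable as an A/D converter with thresholds at the partition endpoints --- attains $I(X;[W]_{\mathcal{P}}) > I(X;W) - \epsilon$. Either way, for this $L$ and converter, $I(X;W) - I(X;\hat{Z}) < \epsilon$.

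The step I expect to be the main obstacle is the limiting argument that refinement actually pushes $H(X \mid \hat{Z})$ up to $H(X \mid W)$: the cleanest justification is the supremum-over-partitions formula for mutual information, which needs only finiteness of $I(X;W)$ --- guaranteed here since $X$ is binary. The unbounded support of the Gaussian $W$ requires only a short remark: clipping the tails onto the two outermost levels loses at most the conditional-entropy contribution of an event of exponentially small (in $B$) probability, which tends to $0$, so it does not obstruct the approximation.
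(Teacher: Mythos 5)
Your proposal is correct, and your second route is essentially the paper's proof: the paper invokes the general (Kolmogorov/Gelfand--Yaglom--Pinsker, cited to Cover and Thomas) definition $I(X;W)=\sup_{\mathcal{P},\mathcal{Q}} I([X]_{\mathcal{P}};[W]_{\mathcal{Q}})$, notes that the supremum over partitions of $\mathcal{X}$ is trivial because $|\mathcal{X}|=2$, uses the Markov chain $X \rightarrow W \rightarrow [W]_{\mathcal{Q}_L}$ to get $I(X;[W]_{\mathcal{Q}_L}) \leq I(X;W)$, and then extracts an $\epsilon$-near-optimal finite partition $\mathcal{Q}$, identifying $[W]_{\mathcal{Q}}$ with $\hat{Z}$ for an A/D($L$) converter with $L=|\mathcal{Q}|$ --- exactly your supremum argument. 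Your first route is genuinely different and more constructive: you fix an explicit family of uniform quantizers with range $B$ and cell width $\delta$, reduce the gap to $H(X|\hat{Z})-H(X|W)$ via the binary-input identity, and drive it to zero by L\'evy's upward martingale convergence applied to $\pi(W)=P(X=1\mid W)$ together with dominated convergence (the boundedness of the binary entropy function handling both the limit interchange and the Gaussian tails). What the paper's (and your second) argument buys is brevity --- the approximation is immediate from the definition of mutual information for general alphabets, with no need to verify that a specific quantizer family works; what your martingale argument buys is an explicit, implementable converter (thresholded uniform quantization with clipping) and a transparent mechanism for why refinement recovers $I(X;W)$, at the cost of checking that the partitions are nested so the induced $\sigma$-algebras increase to $\sigma(W)$. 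Both are valid; the only point worth stating carefully in the constructive version is that nestedness (e.g.\ dyadic refinement) is needed before citing martingale convergence.
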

\begin{proof}
Observe that this lemma follows from the general definition of mutual information \cite{inft:cover} pg. 252. Specifically, let $\mathcal{P}$ and  $\mathcal{Q}$ be finite partitions of the range of $X$ and $W$, respectively. The quantization of $X$ and $W$ by $\mathcal{P}$ and  $\mathcal{Q}$ (denoted $[X]_{\mathcal{P}}$ and $[W]_{\mathcal{Q}}$) are discrete random variables defined by 
\begin{align}
Pr([X]_{\mathcal{P}}=i)&=Pr(X \in P_i)=\int_{P_i} dF(x) \\
Pr([W]_{\mathcal{Q}}=i)&=Pr(W \in Q_i)=\int_{Q_i} dF(w)
\end{align} 
Then by definition, 
\begin{align}
I(X;W)&=\sup_{\mathcal{P},\mathcal{Q}} I([X]_{\mathcal{P}};[W]_{\mathcal{Q}}) \\
&=\sup_{\mathcal{Q}} I(X;[W]_{\mathcal{Q}}).
\end{align}
Where the second equality follows trivially since $| \mathcal{X} |=2$. Since $X \rightarrow W \rightarrow [W]_{\mathcal{Q}_L}$ is a Markov chain, , any A/D(L) converter with associated partition $\mathcal{Q}_L$ will have $I(X;[W]_{\mathcal{Q}_L}) \leq I(X;W) $. Thus given $\epsilon > 0$, by definition of $\sup$, there exists a finite partition of the range of $W$, say $\mathcal{Q}$, such that $$ I(X;W)-I(X; [W]_{\mathcal{Q}})< \epsilon. $$
Observe that $[W]_{\mathcal{Q}}$ is just $\hat{Z}$ for a particular A/D(L) converter where $L=|\mathcal{Q}|$. In short,  discretizing $W^N$ corresponds to an adequately chosen A/D(L) converter, which is precisely $\hat{Z}^N$.   

Note $\hat{Z}^N$  and $W^N$ are referenced from Fig. 2, where we use the notation $\hat{Z}^N$  when $L>2$. The dependence on $L$ is implicit in $\hat{Z}^N$.  
\end{proof}

Observe by the approximation lemma, it will suffice to approximate the equivocation associated with the A/D(L) by the equivocation given by direct access to the AWGN channel. It is simpler to calculate $I(X;W)$ instead of $I(X;\hat{Z})$. Then using the approximation lemma, we have,
\begin{corollary}\label{MEL}(Max Equivocation Loss)
\begin{equation}\label{MELe}
\lim_{L \rightarrow \infty} \frac{h(p_W)-1+I(X;\hat{Z})}{h(p_W)-h(p)}=\frac{h(p_W)-1+I(X;W)}{h(p_W)-h(p)}
\end{equation}

\end{corollary}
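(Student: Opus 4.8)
The plan is to reduce the corollary to the single assertion $\lim_{L\to\infty} I(X;\hat{Z}) = I(X;W)$. Indeed, the denominator $h(p_W)-h(p)$ does not depend on $L$, and it is strictly positive because $0 \le p < p_W < 1/2$ and $h$ is strictly increasing on $[0,1/2]$; moreover the numerator $h(p_W)-1+I(X;\hat{Z})$ depends on $L$ only through $I(X;\hat{Z})$. Hence once $I(X;\hat{Z}) \to I(X;W)$ is established, the claimed equality follows from continuity of the affine map $t \mapsto (h(p_W)-1+t)/(h(p_W)-h(p))$.

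To make the limit $L\to\infty$ precise I would interpret it along a refining sequence of threshold (A/D) quantizers applied to the AWGN output $W$ whose cell partitions $\mathcal{Q}_2,\mathcal{Q}_4,\mathcal{Q}_8,\dots$ are nested and jointly generate the Borel $\sigma$-algebra on the real line (for instance the successive dyadic partitions). For such a sequence a coarser quantization is a deterministic function of any finer one, and $X\to W\to[W]_{\mathcal{Q}_L}$ is a Markov chain (as already used in the proof of the Approximation Lemma), so the data-processing inequality gives both that $I(X;\hat{Z})$ is nondecreasing along the sequence and that $I(X;\hat{Z}) \le I(X;W)$ for every $L$; consequently the limit exists and is at most $I(X;W)$. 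For the matching lower bound I would invoke the Approximation Lemma: given $\epsilon>0$ it produces an A/D converter with $I(X;W)-I(X;\hat{Z})<\epsilon$, and by the monotonicity just noted every sufficiently fine member of the refining sequence does at least as well, so the limit is at least $I(X;W)-\epsilon$. Since $\epsilon>0$ was arbitrary, $\lim_{L\to\infty} I(X;\hat{Z}) = I(X;W)$, and substituting into the ratio finishes the argument.

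The one point needing care — and the only genuine obstacle — is making the symbol $L\to\infty$ meaningful: a priori there are many A/D($L$) converters for each $L$ and the equivocation loss need not be monotone across arbitrary choices, so one must commit to a nested, Borel-generating sequence of quantizers so that the data-processing monotonicity applies and the Approximation Lemma's $\epsilon$-approximation is eventually dominated by a member of the sequence. For the binary-input Gaussian channel this is unproblematic, since the posterior $P(X=1\mid W=w)$ is monotone in $w$, so threshold quantizers already exhaust the supremum defining $I(X;W)$; everything else is the data-processing inequality together with the observation that the denominator is a nonzero constant.
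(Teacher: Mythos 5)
Your proposal is correct and follows essentially the paper's route: the paper's entire proof of this corollary is the single line that it ``follows by direct application of the approximation lemma to the equivocation loss theorem,'' and your reduction to $\lim_{L\to\infty} I(X;\hat{Z})=I(X;W)$ (constant positive denominator, data-processing upper bound, Approximation Lemma lower bound), together with your care about what $L\to\infty$ even means, simply fills in that one line at a higher level of rigor than the paper itself. The only mild looseness is the claim that a sufficiently fine member of your nested dyadic sequence automatically does at least as well as the Lemma's $\epsilon$-good partition --- monotonicity within the nested family does not by itself compare against an arbitrary finite partition --- but the patch is routine (since $W$ has a density, moving the good partition's thresholds to nearby dyadic points changes the quantized mutual information arbitrarily little, or one can invoke the standard fact that $I(X;[W]_{\mathcal{Q}_L})$ increases to $I(X;W)$ along any refining, Borel-generating sequence), so this does not affect the correctness of your argument.
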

\begin{proof}
This follows by direct application of the approximation lemma to the equivocation loss theorem.
\end{proof} \par
Now we are left with calculating $I(X;W)$. First we shall calculate the cdf of $W$.
\begin{align}
P(W \leq w) =& P(X=-1)P(W \leq w | X =-1) \\
&+ P(X=1)P(W \leq w | X =1) \\
=& \frac{1}{2} \left[ \Phi \left(\frac{w+1}{\sqrt{\sigma_M^2+\sigma_W^2}} \right) +\Phi \left(\frac{w-1}{\sqrt{\sigma_M^2+\sigma_W^2}} \right)\right]
\end{align}
Therefore taking the derivative,
\begin{align}
f_W(w) =& \frac{d}{dw} P(W \leq w) \\
=&\frac{\phi \left(\frac{w+1}{\sqrt{\sigma_M^2+\sigma_W^2}} \right) +\phi \left(\frac{w-1}{\sqrt{\sigma_M^2+\sigma_W^2}} \right)}{2\sqrt{\sigma_M^2+\sigma_W^2}}
\end{align}
Then by the definition of mutual information,
\begin{align}
I
&(X;W)= H(W) - H(W|X) \\
=& H(W) - P(X=-1)H(W|X=-1) \\
&-P(X=1)H(W|X=1) \\
=& H(W) - \frac{1}{2} \log 2 \pi e (\sigma_M^2 +\sigma_W^2) \\
=& - \int_{- \infty}^{\infty} f_W(w) \log f_W(w) dw - \frac{1}{2} \log 2 \pi e (\sigma_M^2 +\sigma_W^2)
\end{align} \par
\begin{figure}[!htbp]
\centering
\includegraphics[width=3in]{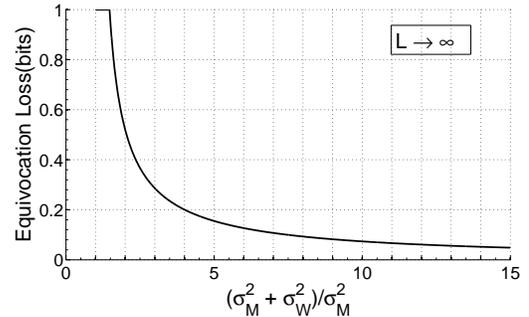}
\caption{Maximum Equivocation Loss ($\sigma_M^2=1$)}
\label{mloss}
\end{figure}
We have calculated the mutual information, and can now use corollary \ref{MEL} to estimate the loss in equivocation. Looking at  (\ref{MELe}) we see as the noise power in the wiretap channel grows, that the equivocation loss goes to zero, which is illustrated in Fig. 3. On the other hand, as the noise power in the wiretap channel gets smaller, secrecy is lost. In particular, if the $\sigma_W^2 = \sigma_M^2 $, approximately $0.5$ bits of secrecy is lost per bit transmitted. The equivocation is less than half of the value for the case of a two level A/D converter. In this particular instance, where the eavesdropper is allowed to use arbitrary precision analog to digital converters, it is difficult to guarantee a particular amount of secrecy. \par
 
The analysis presented here is for a simple model, and is present only to illustrate the difficulty in making assumptions about the physical limitations in channel access of an eavesdropper. In the analysis of more complicated scenarios, especially in the wireless domain, incorrect assumptions on the eavesdropper will present similar security losses. On the other hand, this does not constitute a proof that physical layer security using Wyner's method is not possible. \par
  
\section{ USING SHARED KEYS TO INCREASE SECURITY}
In the previous section, we illustrated what happens if the intended users believe that the eavesdropper uses an A/D(2) converter, but in reality the eavesdropper uses an A/D(L) converter. While this is, in many regards, a toy example, it illustrates that a wrong assumption (belief) on the eavesdropper's channel leads to a compromised security system. In other words, a system is information-theoretically secure only if the assumptions (made by the intended users) of the eavesdropper's channel are correct. However, in practice, the intended users do not control the eavesdropper's channel nor behavior. Hence the intended users can never verify their assumptions.  \par
In a practical setting, apart from using a better A/D converter, the eavesdropper can; 1) use multiple antennas, 2) decrease the distance to the transmitter, 3) use non-AWGN channels, etc... If the intended users make the wrong guess regarding any of these parameters, they will not achieve a secure system.  \par   
Thus making assumptions about the wiretapper's capabilities is difficult. One potential alternative is to combine some of the techniques of modern cryptography with Wyner's secrecy coding techniques. We will artificially create random channels, and then use the corresponding stochastic encoding to secure the message. Our assumption will be, as in contemporary cryptography, that recovering the preshared key used to create the channel is computationally intractable.\par
We remove the dependence on a true source of randomness. In particular, the stochastic encoder is dependent on a source of true randomness. Our solution will be to use general purpose pseudorandom number generators as input to the stochastic encoder and input to create artificial channels. By pseudorandom number generation, we follow the guideline of Goldreich\cite{rndm:gold}, \\
\textit{"Loosely speaking, general-purpose pseudorandom generators are efficient deterministic programs that expand short randomly selected seeds into longer pseudorandom bit sequences, where the latter are defined as computationally indistinguishable from truly random sequences by any efficient algorithm."}  \par  

\subsection{General Solution Strategy}

Our general solution strategy is to simulate the main and wiretap channels in the Wyner wiretap channel model at the transmitter prior to transmission, see Fig. \ref{psol}. The resulting data is then transmitted over a physical channel with channel coding such that it is assumed the eavesdropper and intended users have perfect copies of the sent data. When we simulate the wiretap channel, it is done in such a way that the intended receiver with knowledge of a preshared key can reverse the simulation. In addition, at no point do we assume a true source of randomness.\par

The first step in designing our system, is to choose the preferred channel types for the main and wiretap channel, where the wiretap channel is stochastically degraded with respect to the main channel. We choose an encoding scheme that will achieve information theoretic security based on the assumption that the real channel model was implemented.  Up to this point, we are exactly in the framework of Wyner's paper. This is where we take a separate path. We use pseudorandom number generation as the source of randomness for the stochastic encoding, and to artificially create the main channel. The pseudorandom noise generators are initialized with a randomly selected seed that only the sender knows. In order to emulate the wiretap channel, we use a general purpose pseudorandom number generator that is seeded with a preshared key. For the wiretap channel we require that given the preshared key, the intended receiver must be able to reverse the process of emulating the wiretap channel. \par
In this model the intended receiver is knowledgeable of the shared key for the wiretap noise, and can remove this noise exactly. The pseudorandom noise used to emulate the main channel, can not be removed, and the intended user decodes using the error correcting code.  At this point the intended user has recovered the source message. \par
On the other hand, the eavesdropper is knowledgable of the complete design of the system, with the exception of the preshared key. The eavesdropper's attacks will then be based on exploiting the weakness of the pseudorandom number generation to recover the secret key, and seeds.

\begin{figure}[!htbp]
\centering
\includegraphics[width=3.5in]{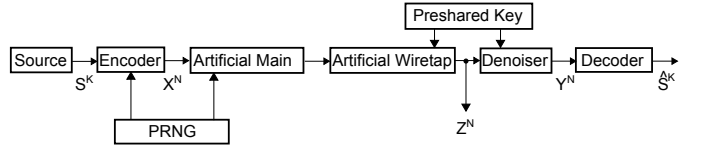}
\caption{General Model for Wyner based coding using computational intractability}
\label{psol}
\end{figure}

\subsection{Example}

We provide an example that follows the approach of Mihaljevic \cite{wtap:mishalpn} in system design. In particular, we assume the following notation:
\begin{align*}
 \bold{a} &= [a_i]_{i=1}^l : \text{plaintext} \\
  \bold{r}&=[r_i]_{i=1}^{m-l} : \text{Bernoulli}(1/2) \text{i.i.d. random variables} \\
 \bold{u} &= [u_i]_{i=1}^k  : \text{public Bernoulli}(1/2) \text{i.i.d. random variables} \\
 \bold{S} &= [s_{i,j}]_{i=1 \: j=1}^{k \:\quad n}  :\text{ binary} (k \times n) \text{secret key matrix} \\
 \bold{v} &= [v_i]_{i=1}^n  :\text{ unknown  Bernoulli}(p) \text{i.i.d. random variables } \\
 \bold{M} &: \text{invertible mixing matrix }(m \times m) \\ 
 f_E  &: \text{stochastic encoder from} \{ 0,1 \}^m \rightarrow \{0,1\}^n \\
 g  &: \text{decoder} \\
\end{align*}

Then encryption and decryption are as follows.
The ciphertext, $\bold{z}$, is calculated as
\begin{equation}
\bold{z} = f_E( \bold{M} \cdot  (\bold{a} \parallel \bold{r}) )  \oplus \bold{u} \cdot \bold{S} \oplus \bold{v} 
\end{equation}
where $\parallel$ denotes concatenation. Let $trunc( \bold{x}, l)$, be the truncation function which returns the first $l$ bits of $\bold{x}$. Decryption of the ciphertext, is shown below,
\begin{equation}
\bold{a}  = \text{trunc} \left( \bold{M}^{-1} \cdot g(\bold{z} \oplus \bold{u} \cdot \bold{S} ) , l \right)
\end{equation}

 There are three main parts in the design of this system. In particular, we first use concatenation of a random vector and a mixing matrix to increase the entropy of the input to the encoding. This helps protect against known plaintext attacks. Then we create a main channel that is a $BSC(p)$, and we employ a stochastic encoding to correct for the corresponding errors. Finally we use a secret key matrix, and public random vector to create a "noisier" channel for the eavesdropper.   The new system is in Fig. \ref{prng}. From an information theoretic point of view, since the error correction will compensate for the added randomness, this system is not secure, as the eavesdroppers channel is deterministic. From here we take the viewpoint of conventional cryptography, that solving for the secret key is computationally intractable. In particular, this is an instance of the learning parity with noise (LPN) problem, which is provably NP-hard. Furthermore the LPN problem is a natural choice for stochastic encoding techniques. This follows since solving for the secret key, $S$, is equivalent to decoding a random linear block code \cite{lpn:bk}. \par
 \begin{figure}[!htbp]
\centering
\includegraphics[width=3in]{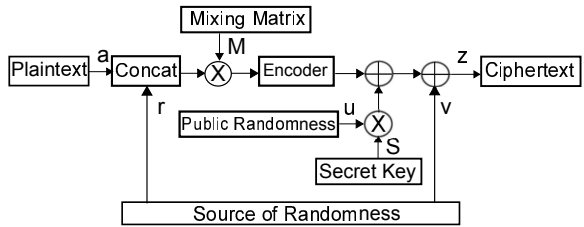}
\caption{Learning Parity with Noise Based Cryptoystem  }
\label{prng}
\end{figure}
 
Thus we are combining a computationally hard problem with an information theoretic based security method. We have now made a trade off on assumptions about the eavesdropper's channel, to one on the ability to solve a computationally hard problem. The latter is widely accepted in conventional cryptography. \par
\subsection{Implementation Issues}
A motivation for this paper, was to create a lightweight cryptographic system based on concepts from physical layer security. Two potential issues that will increase the complexity of using this system are; 1) the creation of satisfactory and cryptographically secure random variables, and 2) proper choice of coding parameters. \par
Creation of cryptographically secure pseudo-random numbers is a difficult problem, and in this security system they are directly applied to plaintext. This leaves an opening for attacks on the pseudo-random number generator (PRNG).  Furthermore a source of public randomness is assumed. This leaves the system open to man in the middle attacks. Thus we must include measures to guarantee availability and authenticity of the public random vector. \par
In addition, we must choose adequate coding parameters for the length of the $\bold{r}$ vector, and the stochastic encoder, $f_E$. A thorough analysis should be performed to find a characterization of the eavesdropper's channel with regards to Wyner's wiretap model. In particular, assuming the randomness in $\bold{v}$ is cryptographically secure, we say the main channel is a $BSC(p)$ channel. Then we may find a $0<p_W< \frac{1}{2}$ as large as possible and base the construction of $f_E$ on the assumption that the eavesdropper's channel is a $BSC(p_W)$. Observe choosing $p_W$ large, increases the rate of our stochastic encoder, which reduces the security of our system.  Thus we must find an upper limit for $p_W$ such that the resulting parameters for $\bold{r}$ and $f_E$ ensure it is computationally infeasible for the eavesdropper to recover the secret key. \par

\section{Conclusion}
In this paper, one of the concerns in basing a security system on assumptions about the eavesdropping capabilities of an unintended user has been adressed. An alternative has been presented which uses one of the main principles of Wyner's paper, security through stochastic encoding. It is a convenient solution, in the sense that it does not require any changes in infrastructure. In particular, it can be added on at the application layer. It is noted, that this is a toy example, and used to illustrate a point. \par

A valid counter argument to the solution proposed in this paper, is that there is still an assumption made on the eavesdropper's capabilities in using this method. The onus has been transferred from relying on physically acquiring the signal to computational limitations. In regard to this argument, this paper than serves two purpose. First, it shows that care must be taken in assumptions about the eavesdropper. In addition, for a simple case it illustrates how to calculate the effect of incorrect assumptions, and engineer a system within certain secrecy tolerances.

\section*{Acknowledgment}
The author would mainly like to thank Prof. Aleksandar Kavcic for his patience, and assistance in writing this technical paper.

\ifCLASSOPTIONcaptionsoff
  \newpage
\fi

\end{document}